\documentclass[conference,letterpaper]{IEEEtran}

\IEEEoverridecommandlockouts

\usepackage[utf8]{inputenc} 
\usepackage[T1]{fontenc}
\usepackage{url}
\usepackage{ifthen}
\usepackage{cite}
\usepackage[cmex10]{amsmath} 

\usepackage{cite}
\usepackage{amsmath,amssymb,amsfonts}
\usepackage{algorithmic}
\usepackage{graphicx}
\usepackage{textcomp}
\usepackage{xcolor}
\usepackage{balance}
\usepackage{cleveref}
\def\BibTeX{{\rm B\kern-.05em{\sc i\kern-.025em b}\kern-.08em
    T\kern-.1667em\lower.7ex\hbox{E}\kern-.125emX}}

\usepackage{tikz}

\usetikzlibrary{arrows.meta, bending}

\usepackage{tikzscale}
\usepackage{pgfplots}
\pgfplotsset{compat=1.18} 
\usepgfplotslibrary{fillbetween}

\definecolor{tabblue}{HTML}{1f77b4}
\definecolor{taborange}{HTML}{ff7f0e}
\definecolor{tabgreen}{HTML}{2ca02c}
\definecolor{tabred}{HTML}{d62728}
\definecolor{tabpurple}{HTML}{9467bd}
\definecolor{tabbrown}{HTML}{8c564b}
\definecolor{tabpink}{HTML}{e377c2}
\definecolor{tabgray}{HTML}{7f7f7f}
\definecolor{tabolive}{HTML}{bcbd22}
\definecolor{tabcyan}{HTML}{17becf}
\definecolor{deepblue}{HTML}{0000ff}

\usepackage{amsthm}
\usepackage{mathtools}
\usepackage{bbm}

\newtheorem{theorem}{Theorem} 
\newtheorem{lemma}{Lemma}

\newtheorem{corollary}{Corollary}

\theoremstyle{definition}
\newtheorem{definition}{Definition}

\DeclarePairedDelimiterX{\infdivx}[2]{(}{)}{%
  #1\;\delimsize\|\;#2%
}
\DeclarePairedDelimiterX{\infdivmi}[2]{(}{)}{%
  #1;#2%
}
\DeclarePairedDelimiterX{\infdivcsmi}[2]{(}{)}{%
  #1 \to #2%
}

\newcommand{\KLD}{D_{\mathrm{KL}}\infdivx}
\newcommand{\MI}{I\infdivmi}
\newcommand{\CSD}{D_{\mathrm{CS}}\infdivx}

\newcommand{\qty}[1]{\left\{ #1 \right\}}
\newcommand{\norm}[1]{\left\lVert #1 \right\rVert}
\newcommand{\RND}[2]{\frac{d #1}{d #2}}
\newcommand{\Ind}[1]{\mathbf{1}\!\left[ #1 \right]}
\newcommand{\Normal}[2]{\mathcal{N}(#1, #2)}
\newcommand{\FI}{I_\mathrm{F}\infdivcsmi}

\newcommand{\Nats}{\mathbb{N}}

\DeclareMathOperator{\enc}{\mathrm{enc}}
\DeclareMathOperator{\dec}{\mathrm{dec}}
\DeclareMathOperator{\XSpace}{\mathcal{X}}
\DeclareMathOperator{\YSpace}{\mathcal{Y}}
\DeclareMathOperator{\ZSpace}{\mathcal{Z}}
\DeclareMathOperator{\Exp}{\mathbb{E}}
\DeclareMathOperator{\strings}{\{0, 1\}^*}
\DeclareMathOperator{\Prob}{\mathbb{P}}
\DeclareMathOperator{\Ent}{\mathnormal{H}}
\DeclareMathOperator{\DiffEnt}{\mathnormal{h}}
\DeclareMathOperator{\defeq}{\,\triangleq\,}

\DeclareMathOperator{\Unif}{\mathrm{Unif}}
\DeclareMathOperator{\Geom}{\mathrm{Geom}}

\DeclareMathOperator{\Oh}{\mathcal{O}}

\newcommand{\CrossEnt}{\Ent\infdivmi}
\newcommand{\filteredInds}{\mathcal{S}}

\usepackage[final]{changes}

\usepackage[top=0.75in, bottom=1.05in, left=0.625in, right=0.625in]{geometry}
\begin{document}

\title{Rejection Sampling is Optimal for \\
Relative Entropy Coding
\thanks{The authors acknowledge financial support from Imperial College London through an Imperial College Research Fellowship grant awarded to GF. SH, FA, and TL acknowledge support from the Natural Sciences and Engineering Research Council of Canada.}
}

\author{\IEEEauthorblockN{Spencer Hill\textsuperscript{*}}\thanks{\textsuperscript{*}Work initiated while visiting GF at Imperial College London.}
\IEEEauthorblockA{Queen's University \\
Kingston, Canada \\
spencer.hill@queensu.ca}
\and
\IEEEauthorblockN{Fady Alajaji}
\IEEEauthorblockA{Queen's University \\
Kingston, Canada \\
fa@queensu.ca}
\and
\IEEEauthorblockN{Tam\'as Linder}
\IEEEauthorblockA{Queen's University \\
Kingston, Canada \\
tamas.linder@queensu.ca}
\and
\IEEEauthorblockN{Gergely Flamich}
\IEEEauthorblockA{Imperial College London \\
London, UK \\
g.flamich@imperial.ac.uk} 
}

\maketitle

\begin{abstract}
In relative entropy coding, a sender aims to design a stochastic code such that, on input {\boldmath $X \sim P_X$}, the receiver can generate a sample {\boldmath $Y \sim P_{Y \mid X}$}. 
It is a standard result that (1) this requires at least {\boldmath $\MI{X}{Y}$} bits, (2) the lower bound is achievable within a logarithmic gap, and (3) this gap cannot be reduced in general. 
The necessity of the gap suggests that mutual information is not the appropriate information measure for quantifying the rate of relative entropy coding.
\par
A potential alternative emerged in the work of Flamich et al.~\cite{flamich2025redundancy}, who proved a tighter lower bound of {\boldmath $\FI{X}{Y}$}, a quantity we call the \textit{functional information}.
In this paper, we \deleted{show this lower bound is tight by constructing} \added{construct} the \emph{ring toss code}, an encoding method for rejection sampling which uses at most {\boldmath $\FI{X}{Y} + \log e$ bits}, \added{hence obtaining a tight one-shot characterization of relative entropy coding.} 
\par
For the trivial channel \boldmath{$Y = X$}, our result recovers the noiseless source coding theorem within a small constant.
For a general channel, it implies that the classical mutual information lower bound is achievable within {\boldmath ${\log(\MI{X}{Y} + 1) + 2.45}$} bits in general and within {\boldmath $1.45$} bits for singular channels, which are both the tightest bounds of their kind to date. 
Moreover, our one-shot result also recovers Sriramu and Wagner's asymptotic results on the second-order redundancy of relative entropy codes.
\end{abstract}

\begin{IEEEkeywords}
Relative entropy coding, channel simulation, rejection sampling, ring toss code, singular channels. 
\end{IEEEkeywords}

\section{Introduction}
For a pair of random variables $X, Y \sim P_{X, Y}$, the goal of relative entropy coding (also called channel simulation) is to find a \emph{stochastic code}~\cite{flamich2026data} that, given an input $X \sim P_X$, encodes a single sample $Y \sim P_{Y \mid X}$ in finitely many bits. In doing so, the sender and receiver \emph{simulate} the channel ${X \to Y}$. 
Formally, a stochastic code is a triplet $(Z, \enc, \dec)$ where
\begin{itemize}
    \item $Z$ is a random variable independent of $X$ shared by encoder and decoder called the \emph{common randomness};
    \item $\enc: \XSpace \times \ZSpace \to \strings$ is the \emph{encoder} mapping an input and common randomness realization to a binary string;
    \item $\dec : \strings \times \ZSpace \to \YSpace$ is the \emph{decoder} mapping the encoder output and common randomness to a point $y \in \YSpace$. 
\end{itemize}
Together with the correctness requirement that for all $x \in \XSpace$, 
\begin{equation}
    \dec(\enc(x, Z), Z) \sim P_{Y \mid X = x},
\end{equation}
it can be shown that~\cite{harsha2010communication, li2018strong} 
\begin{equation}
\label{eq:basic_mi_lower_bound}
\MI{X}{Y} \leq H(Y \mid Z) \leq \Exp [\lvert \enc(X, Z) \rvert],
\end{equation}
where $\lvert \, \cdot \, \rvert$ is the length of a binary string. There is a long line of research work~\cite{harsha2010communication, braverman2014public, li2018strong, flamich2023adaptive} to design stochastic codes which almost achieve this lower bound, the crowning achievement of which is the \emph{Poisson functional representation}~\cite{li2018strong}, yielding
\begin{equation}
    H(Y \mid Z) \leq \MI{X}{Y} + \log (\MI{X}{Y} + 1) + 4. 
\end{equation}
This logarithmic redundancy over the mutual information is also achieved by other schemes, including \emph{greedy rejection sampling}~\cite{harsha2010communication, flamich2023adaptive} and \emph{greedy Poisson rejection sampling}\cite{flamich2023greedy}. However, it is not possible to remove this $\log$ term, as there exist channels for which $\MI{X}{Y} + \log ( \MI{X}{Y} + 1) - 1$ bits are necessary~\cite{braverman2014public, li2018strong}. 
Thus, a natural question is whether a different information measure and encoding scheme could yield a tight bound. 
\par
In~\cite{flamich2025redundancy}, progress was made in this direction by improving the lower bound on any relative entropy code to
\begin{equation}
\label{eq:csmi_lower_bound}
\FI{X}{Y} \leq H(Y \mid Z),
\end{equation}
where $\FI{X}{Y}$ is the \emph{functional information} (see Definitions~\ref{def:csd} and~\ref{def:csmi}). Note that $\MI{X}{Y} \leq \FI{X}{Y}$, so \eqref{eq:csmi_lower_bound} is at least as tight as \eqref{eq:basic_mi_lower_bound}.
However, the lower bound in~\eqref{eq:csmi_lower_bound} has only been shown to be achievable (within $\log e$ bits) for additive exchangeable noise channels with finite field input~\cite{zhao2025rejection}. In this paper, we construct the \emph{ring toss code}, an encoding method for standard rejection sampling. 
We prove for general channels that it achieves the lower bound in~\eqref{eq:csmi_lower_bound} within an additive constant of $\log e$, demonstrating that \textbf{rejection sampling is optimal for relative entropy coding}.  
\par 
Concretely, our contributions are as follows:
\begin{itemize}
\item We construct the ring toss code and prove that its one-shot rate satisfies ${H(Y \mid Z) \leq \FI{X}{Y} + \log e}$. This upper bound matches the lower bound in~\eqref{eq:csmi_lower_bound} within an additive constant of $\log e \approx 1.44$. The algorithm and proof are remarkably simple, relying entirely on a changed encoding of the index returned by standard rejection sampling. 
\item For a class of channels called \emph{singular channels}~\cite{altuug2014refinement, sriramu2024optimal} (see Definition~\ref{def:singular}), we prove that $\FI{X}{Y} = \MI{X}{Y}$. Therefore, our work identifies a large class of channels for which we can achieve constant non-asymptotic redundancy over the mutual information. 
\item For general channels, our result implies the tighter achievability result in terms of the mutual information 
\begin{align}
\Ent(Y \mid Z) < \MI{X}{Y} + \log(\MI{X}{Y} + 1) + 2.45.
\end{align}
\item Finally, our one-shot result recovers the asymptotic characterization of relative entropy coding in~\cite{sriramu2024optimal, flamich2025redundancy}, and is achieved by a simpler scheme with optimal redundancy.
\end{itemize}
\subsection{Notation}
In this paper, $\log$ denotes the binary logarithm, and all information measures are in bits. 
For an event $A$, its indicator function is $\Ind{A}$.
Unless otherwise stated, all random variables are assumed to take values in some arbitrary Polish space.
Random variables are written in capital letters $X$, we write $X \sim P_X$ to say that $X$ has probability measure $P_X$, and use $P_X^{\otimes n}$ for $n$ independent copies of $P_X$. Random variable alphabets or spaces are denoted by caligraphic letters $X \in \XSpace$, $Y \in \YSpace$. 
We denote the expectation of $X$ and the conditional expectation of $X$ given $Y$ as $\Exp[X]$ and $\Exp[X \mid Y]$, respectively.
For two probability measures $P$ and $Q$, we write $P \ll Q$ to indicate that $P$ is absolutely continuous with respect to $Q$, in which case $\RND{P}{Q}$ denotes the Radon-Nikodym derivative. 
For a discrete probability distribution $P$, we denote its Shannon entropy as $\Ent(P)$.
For a distribution $P$ over the reals with density function $f$, we denote its differential entropy as $\DiffEnt(f)$ or $\DiffEnt(P)$. We write $\CrossEnt{P}{Q}$ for the cross-entropy between two distributions $P$ and $Q$. For a $Q$-measurable function $r$, we denote its $Q$-essential supremum as $\norm{r}_\infty$.

\section{Preliminaries} 
We begin by defining the channel simulation divergence. We refer the reader to~\cite{flamich2024data} for a comprehensive overview of the channel simulation divergence and its properties. 
\begin{definition}[Width Function] For two probability measures $P \ll Q$, the $Q$ width function at height level $\ell$ is defined as
\begin{equation}
\label{eq:widthfunction} 
    w_Q(\ell) \defeq \Prob_{Z \sim Q} \left( \RND{P}{Q} (Z) \geq \ell \right) = Q \left( \RND{P}{Q} \geq \ell \right)\!. 
\end{equation}
\end{definition}
\added{We can analogously define $w_P(\ell)$ as the $P$ width of $\RND{P}{Q}$ at height $\ell$.} As visualized in Fig.~\ref{fig:width}, $w_Q(\ell)$ measures the $Q$-width of the superlevel set of $\RND{P}{Q}$ at height $\ell$. 
It can be shown that $w_Q(\ell)$ is a probability density function (pdf), and the channel simulation divergence is defined as its differential entropy. 
\begin{definition}[Channel Simulation Divergence \cite{goc2024channel}] \label{def:csd} 
Let ${P \ll Q}$ with $Q$-width function $w_Q$. Then, the channel simulation divergence of $P$ from $Q$ is 
\begin{equation}
\CSD{P}{Q}
\defeq -\int_{0}^\infty w_Q(\ell) \log w_Q(\ell) \, d\ell
=\DiffEnt(w_Q). \label{eq:csd}
\end{equation}
\end{definition}
\begin{figure}
    \centering
    \includegraphics[]{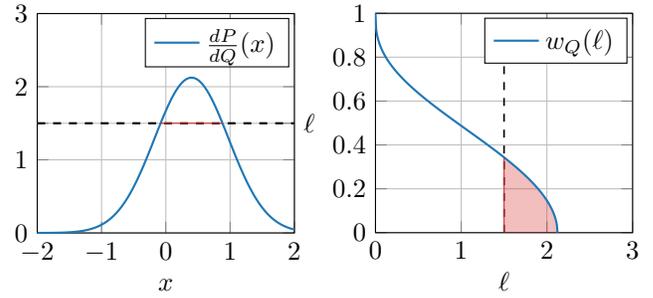}
    \caption{Visualization of the width function when $P = \Normal{0.3}{0.25}$ and $Q = \Normal{0}{1}$. \textbf{Left:} the density ratio $\RND{P}{Q}$; the value of the width function at $\ell = 1.5$ is the $Q$-width of the \textcolor{tabred}{red} line. \textbf{Right:} the width function $w_Q(\ell)$, with the value at $\ell = 1.5$ also marked as in the left plot.}
    \label{fig:width}
\end{figure}
Lemma IV.1 of \cite{goc2024channel} establishes that $D_{\mathrm{CS}}$ is a divergence: it is nonnegative and is zero if and only if $P = Q$. 
Additionally, for $\kappa = \KLD{P}{Q}$ the Kullback-Leibler (KL) divergence,
\begin{equation}
    \kappa \leq \CSD{P}{Q} \leq \kappa + \log (\kappa + 1) + 1. \label{eq:kltocsd}
\end{equation}
\par
In a slight abuse of notation, for random variables ${X, Y \sim P_{X, Y}}$ and $y \in \YSpace$ we write 
\begin{equation}
\label{eq:ywidth}
w_y(\ell) \defeq \Prob_{X \sim P_X} \left( \RND{P_{X \mid Y}}{P_X}(X \mid y) \geq \ell \right). 
\end{equation}
This way, we can write the channel simulation divergence as $\CSD{P_{X \mid Y = y}}{P_X} = -\int_0^\infty w_y(\ell) \log w_y(\ell) \,d\ell$. \par
Next, generalizing a result of Li and El Gamal \cite{li2018strong}, Flamich et al.~\cite{flamich2025redundancy} showed that for any stochastic code 
\begin{equation}
    \Exp[\CSD{P_{X \mid Y}}{P_X}] \leq \Ent(Y \mid Z). \label{eq:lowerbound}
\end{equation}
This motivates the following definition, whose name is inspired by the \emph{strong functional representation lemma}~\cite{li2018strong}, and which is a generalization of the ``excess functional information lower bound'' \cite{li2018strong,zhao2025rejection}.
\begin{definition}[Functional Information]  
\label{def:csmi}
Let $X, Y \sim P_{X, Y}$ be a pair of dependent random variables.
For $y \in \YSpace$, let $w_y$ be as in \eqref{eq:ywidth}.
Then, the \emph{functional information of $X$ to $Y$} is
\begin{equation}
\FI{X}{Y} \defeq \Exp[\CSD{P_{X \mid Y}}{P_X}] = \Exp[\DiffEnt[w_Y]]. \label{eq:csmi}
\end{equation}
Then,~\eqref{eq:lowerbound} may be restated as $\FI{X}{Y} \leq H(Y \mid Z)$. 
\end{definition}
\subsection{Rejection Sampling}
We briefly describe the rejection sampling algorithm, which forms the basis of the ring toss code; see~\cite{flamich2026data, li2024channel} for a detailed treatment. 
Rejection sampling simulates samples from a \textit{target distribution} $P$ given independent and identically distributed (i.i.d.) samples $\qty{Y_i}_{i \geq 1}$ from a \textit{proposal distribution} $Q$. Concretely, let $r = dP / dQ$ and assume that $r$ is uniformly bounded by a constant $M$. Then, rejection sampling sequentially examines the proposal samples, at each step $i$ accepting sample $Y_i$ with probability $r(Y_i) / M$. 
Thus, for $\qty{U_i}_{i \geq 1}$ i.i.d. $\Unif(0, 1)$, the uniform distribution on $(0, 1)$, rejection sampling returns the index
\begin{equation}
K \defeq \min \qty{i : U_i \leq \frac{1}{M} r(Y_i)}. \label{eq:rejsampling}
\end{equation}
Under this definition, it is a standard result that $Y_K \sim P$. 
Now, we can use rejection sampling to construct a stochastic code as follows. 
We begin by assuming that there is some bound 

\noindent $M$ such that $dP_{Y \mid X} / dP_Y \, (y \mid x) \leq M$ for all $x$ and $y$.
Upon input $X \sim P_X$, we set $Q \gets P_Y$ and $P \gets P_{Y \mid X}$. 
Here, the shared randomness is ${Z = \qty{(Y_i, U_i)}_{i \geq 1} \overset{\text{i.i.d.}}{\sim} P_Y \times \Unif(0, 1)}$ and the encoder is a prefix-free encoding of the index $K$, from which the decoder can recover $Y_K \sim P_{Y \mid X = x}$. 
One can observe that $K \sim \Geom(1 / M)$, but using this unconditional distribution yields a rate of
\begin{align}
\Ent(Y \mid Z) \leq \Ent(K) \leq \log M + 1.  
\end{align}
The issue is that $\log M$ can be much larger than $\MI{X}{Y}$.
Essentially, this encoding is suboptimal because the unconditional distribution of $K$ discards the information the decoder has from the shared randomness $Z$. 
Our main contribution is demonstrating that, by considering the distribution of $(K \mid Z)$, we can obtain a tight upper bound on $H(Y \mid Z)$. This is conceptually similar to the construction of~\cite{phan2025channel}, which uses sorting to adapt the encoding of rejection sampling such that it uses at most $\MI{X}{Y} + \log (\MI{X}{Y} + 1) + 9$ bits. 
\par
The ring toss code leverages the fact that the decoder can compute the acceptance probability of each proposal sample $Y_k$ under $P_X$ given its shared access to $Z$. This leads to a more efficient coding distribution and hence a tighter bound on $H(Y \mid Z)$. Moreover, the probability computed is exactly a width function, which is why the channel simulation divergence appears in the upper bound. 
\section{The Ring Toss Code}
\begin{theorem} \label{thm:rts}
Let $X, Y \sim P_{X, Y}$ be dependent random variables such that $\RND{P_{Y \mid X}}{P_Y}(y \mid x) \leq M$ for all $x$ and $y$. 
Let ${Z = \qty{(Y_i, U_i)}_{i \geq 1} \overset{\text{i.i.d.}}{\sim} P_Y \times \Unif(0, 1)}$ be the common randomness shared between encoder and decoder and define $K$ to be the index returned by rejection sampling, 
\begin{align*}
K(X, Z) \defeq \added{\min} \qty{\!i \in \Nats : U_i \leq \!\frac{1}{M} \frac{dP_{Y \mid X}}{dP_Y}(Y_i \!\mid\! X)\!}\!.
\end{align*}
Then, $Y_K \sim P_{Y \mid X}$ and 
\begin{equation}
\Ent(K \mid Z) \leq \FI{X}{Y} + \log e.
\end{equation}
Moreover, this upper bound is achieved by encoding $K$ using a prefix-free code for the distribution
\begin{equation}
    Q_{K \mid Z}(k \mid z) \defeq w_{y_k}(M u_k) \prod_{i=1}^{k-1} (1 - w_{y_i}(M u_i)), \label{eq:q}
\end{equation}
where $w_{y}$ is the width function defined as in~\eqref{eq:ywidth} and ${z = \qty{(y_i, u_i)}_{i \geq 1}}$ is an instance of the common randomness. 
\end{theorem}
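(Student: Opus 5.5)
The plan is to bound $\Ent(K \mid Z)$ by a cross-entropy. For any (sub)probability mass function $Q_{K\mid Z}(\cdot\mid z)$ on $\Nats$, Gibbs' inequality gives $\Ent(K\mid Z) \le \Exp[-\log Q_{K\mid Z}(K\mid Z)]$, and a prefix-free (e.g.\ Shannon) code for $Q_{K\mid Z}$ attains this up to the usual rounding. Correctness $Y_K \sim P_{Y\mid X}$ is the standard rejection sampling fact quoted in the preliminaries, so it needs no further argument.

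\textbf{Step 1: motivate $Q$ and check it is a subprobability.} By Bayes' rule, $\RND{P_{Y\mid X}}{P_Y}(y\mid x) = \RND{P_{X\mid Y}}{P_X}(x\mid y)$. Hence, averaging over $X\sim P_X$ with $(y_i,u_i)$ fixed, the probability that sample $i$ is accepted is
\begin{equation*}
P_X\!\left(\RND{P_{X\mid Y}}{P_X}(X\mid y_i)\ge M u_i\right)=w_{y_i}(Mu_i).
\end{equation*}
So \eqref{eq:q} is what one would get if acceptances were independent across $i$ given $Z$. They are not independent, since all share the same $X$, but this does not matter. The sum $\sum_k Q_{K\mid Z}(k\mid z) = 1-\prod_{i\ge1}(1-w_{y_i}(Mu_i)) \le 1$ telescopes, so $Q_{K\mid Z}$ is a valid subprobability and the cross-entropy bound applies. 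We then split
\begin{equation*}
-\log Q_{K\mid Z}(K\mid Z) = -\log w_{Y_K}(MU_K) + \sum_{i<K} -\log\bigl(1-w_{Y_i}(MU_i)\bigr).
\end{equation*}

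\textbf{Step 2: the accepted term equals $\FI{X}{Y}$.} Conditionally on $X$, the accepted pair satisfies $Y_K\sim P_{Y\mid X}$, and $L \defeq MU_K$ is uniform on $[0, r]$ with $r = \RND{P_{X\mid Y}}{P_X}(X\mid Y_K)$. Computing the conditional density of $L$ given $Y_K=y$ by integrating over $X\sim P_{X\mid Y=y}$ gives
\begin{equation*}
\int \frac{\Ind{r(x,y)\ge \ell}}{r(x,y)}\,dP_{X\mid Y=y}(x)=\int \Ind{r(x,y)\ge\ell}\,dP_X(x)=w_y(\ell).
\end{equation*}
So $L\mid Y_K=y$ has density $w_y$. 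It follows that $\Exp[-\log w_{Y_K}(L)] = \Exp[\DiffEnt(w_Y)] = \FI{X}{Y}$ by Definition~\ref{def:csmi}. This also shows that $w_{Y_K}(MU_K)>0$ almost surely.

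\textbf{Step 3: the rejected terms contribute at most $\log e$.} With $f(y,u) = -\log(1-w_y(Mu))$, a Wald-type computation conditional on $X$ gives
\begin{equation*}
\Exp\Bigl[\sum_{i<K} f(Y_i,U_i)\Bigm| X\Bigr] = \sum_{i\ge1}(1-1/M)^{i-1}\,\Exp\bigl[f(Y,U)\Ind{\text{rej}}\bigm| X\bigr] = M\,\Exp\bigl[f(Y,U)\Ind{\text{rej}}\bigm| X\bigr].
\end{equation*}
Averaging over $X$ and using that $\Prob(\text{rej}\mid Y,U) = 1-w_Y(MU)$, this becomes $M\,\Exp_{Y,U}[-(1-w)\log(1-w)]$ with $Y \sim P_Y$ and $U$ uniform. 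Substituting $\ell = Mu$ turns it into $\int_0^M \Exp_{Y\sim P_Y}[-(1-w_Y(\ell))\log(1-w_Y(\ell))]\,d\ell$. The elementary inequality $-(1-w)\ln(1-w)\le w$ on $[0,1]$ bounds this by $\log e\,\Exp_Y\!\int_0^\infty w_Y(\ell)\,d\ell$. Since each $w_Y$ is a pdf, this equals $\log e$.

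Combining Steps 1–3 yields $\Ent(K\mid Z)\le \FI{X}{Y}+\log e$. The main obstacle I expect is Step 2: identifying the conditional law of $MU_K$ given $Y_K$ as exactly $w_{Y_K}$. This requires careful handling of the Radon–Nikodym derivatives, the Bayes identity between the two density ratios, and null sets where $r=0$. The Wald step and Step 3 should be routine once that identity is in place.
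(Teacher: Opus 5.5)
Your proposal is correct and follows essentially the same route as the paper: it uses the cross-entropy bound against the same $Q_{K\mid Z}$, and it evaluates the accepted term to $\FI{X}{Y}$ by the same Fubini computation. You phrase that computation as the accepted height $MU_K$ having conditional density $w_{Y_K}$, while the paper uses the change of measure $M\Ind{Mu \le r(y\mid X)}$ and then averages out $X$. You then bound the rejected terms by $\log e$ via a Wald-type identity and $-(1-w)\ln(1-w)\le w$, as the paper does. The only minor difference is that you use just the subprobability property from the telescoping sum, which suffices for Gibbs' inequality, whereas the paper additionally shows the infinite product vanishes almost surely so that $Q_{K\mid Z}$ is a genuine probability distribution.
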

\begin{proof}
We show in the online appendix~\cite{hill2026rejection} that $Q_{K \mid Z}$ is a valid probability distribution. 
Then, letting ${(K \mid Z) \sim P_{K \mid Z}}$, and noting that $\Ent(Y \mid Z) \leq \Ent(K \mid Z) \leq \CrossEnt{P_{K \mid Z}}{Q_{K \mid Z}}$, we prove Theorem~\ref{thm:rts} by bounding the cross-entropy.
First,
\begin{align}
&\CrossEnt{P_{K \mid Z}}{Q_{K \mid Z}} = \Exp[- \log Q_{K \mid Z}(K \mid Z)] \nonumber \\
&= \Exp[-\log w_{Y_K}(MU_K) ] - \Exp\left[ \sum_{i=1}^{K-1} \log(1 - w_{Y_i}(MU_i)) \right]\!\!. \label{eq:cross_entropy_terms}
\end{align}
We will address each term in~\eqref{eq:cross_entropy_terms} separately. 
However, before doing so, let us determine the joint distribution of $(Y_K, MU_K)$. 
Let ${r(y \mid x) \defeq \RND{P_{Y \mid X}}{P_Y}}(y \mid x)$ and note that by the definition of rejection sampling, we have
\begin{align}
\label{eq:rs_accept_coord_joint_distribution}
\frac{dP_{Y_K, U_K \mid X}}{d(P_Y \!\times\! \Unif(0, 1))}(y,\! u\! \mid\! X) = M \!\cdot\!\Ind{M u \leq r(y \mid X)}\!.
\end{align}
Note that since we are using a universal upper bound $M$,  we have $(K \mid X) \sim \Geom(1 / M)$ irrespective of $X$. 
In particular, ${\Prob(K \geq k \mid X) = \left(1 - \frac{1}{M} \right)^{k-1}}$ and $\Exp[K \mid X] = M$.
Finally, letting $(Y_1, U_1) \sim P_Y \times \Unif(0, 1)$, we get
\begin{align}
&\Exp[- \log\,w_{Y_K}(M U_K)] \nonumber \\
&= \Exp\left[\Exp[-\log w_{Y_K}(MU_K) \mid X] \right] \nonumber \\
&= -\Exp\left[M\Ind{MU_1 \!\leq\! r(Y_1 \mid X)} \log w_{Y_1}(MU_1)\right] \tag{by \cref{eq:rs_accept_coord_joint_distribution}} \\
&= -\Exp\left[M\Exp[\Ind{MU_1 \leq r(Y_1 \mid X)} \mid Y_1, U_1] \cdot \log w_{Y_1}(MU_1)\right] \nonumber \\
&= -\Exp\left[M\cdot w_{Y_1}(MU_1)\log w_{Y_1}(MU_1)\right] \tag{def.\ of $w_y(\ell)$}\\
&= -\Exp\left[\int_0^1 M\cdot w_{Y_1}(Mu)\log w_{Y_1}(Mu) \, du\right] \nonumber \\
&= \Exp[\CSD{P_{X \mid Y}}{P_X}] \tag{sub.\ $\ell \gets Mu$}\\
&= \FI{X}{Y}, \nonumber
\end{align}
Next, we deal with the second term in~\eqref{eq:cross_entropy_terms}.
First, we let ${R_i = \log (1 - w_{Y_i}(MU_i)) \cdot(1 - \Ind{MU_i \leq r(Y_i \!\mid\! X)})}$ and get
\begin{align}
\Exp \Bigg[ &\sum_{i=1}^{K-1} \log (1 - w_{Y_i}(MU_i)) \Bigg] \nonumber \\
&= \Exp \left[\Exp \left[ \sum_{i=1}^K R_i \,\Big\lvert \, X \right]\right] \nonumber \\
&= \Exp \left[ \Exp \left[ K \mid X \right]\Exp \left[ R_1 \mid X \right]  \right]\label{eq:wald} \\
&= M \Exp \big[  \big(1 - \Exp\left[\Ind{MU_1 \!\leq\! r(Y_1 \mid X)} \mid Y_1, U_1 \right]\big) \,\cdot \nonumber\\
&\quad\quad\quad\quad\quad\cdot\log(1 - w_{Y_1}(MU_1)) \big] \nonumber \\
&= M \Exp \left[ (1 - w_{Y_1}(MU_1)) \log (1 - w_{Y_1}(MU_1)) \right] \label{eq:defofwidth} \\
&\geq M \Exp\left[-\int_0^1 w_{Y_1}(Mu) \log e \, du \right]  \label{eq:logineq}\\
&= - \log e, \tag{sub.\ $\ell \gets Mu$ and $w_y$ is a pdf}
\end{align}
where~\eqref{eq:wald} follows from the generalized Wald's equation, since conditioned on $X$, $K$ is a stopping time adapted to $\{R_i\}_{i \ge 1}$, and~\eqref{eq:defofwidth} follows from the definition of $w_y$. \Cref{eq:logineq} is an application of the law of total expectation and the inequality $(1 - x) \log (1 - x) \geq -x \log e$. 
Theorem~\ref{thm:rts} follows. 
\end{proof}
The ring toss code has a nice geometric interpretation: it transforms relative entropy coding into a search problem over $\XSpace$. 
Conventionally, a relative entropy code searches for a sample $Y_K$ which falls under the Radon-Nikodym derivative (or satisfies an analogous property). In this way, most relative entropy codes search over $\YSpace$. Instead, the ring toss code searches over $\XSpace$ for the set such that $M U_i \leq r(Y_i \mid X)$, as shown in Fig.~\ref{fig:rts}. \par 
Defining the sets $S_i = \qty{x \in \mathcal{X} : \RND{P_{Y \mid X}}{P_Y}(Y_i \mid x) \geq M U_i}$, it is actually possible to exactly compute the distribution $P_{K \mid Z}$. In particular, $K = k$ if (1) $x \in S_k$, and (2) $x \notin S_i$ for each $i = 1, \ldots, k-1$. Hence, 
\begin{align}
P_{K \mid Z}(k \mid z) = P_X\left(S_k \setminus \bigcup_{i=1}^{k-1} S_i\right).
\end{align}
One can view $Q_{K \mid Z}$ as an approximation of this distribution, which would be exact if the supports $S_i$ had measure zero overlap under $P_X$. However, \added{in light of the functional information lower bound,} using this more complicated distribution stands to improve by at most $\log e$ bits over the current bound.

Theorem~\ref{thm:rts}, along with Theorem III.1 of~\cite{flamich2025redundancy}, gives a complete characterization of the one-shot rate of relative entropy codes for channels with a bounded density ratio:
\begin{corollary}\added{[Tight One-Shot Characterization of Relative Entropy Coding]}
\label{cor:gnsct}
Let $X, Y \sim P_{X, Y}$ be such that $\RND{P_{Y \mid X}}{P_Y}(y \mid x) \leq M$ for all $x$ and $y$. 
Let $H^* \defeq \inf_{Z \perp X} H(Y \mid Z)$ be the minimum rate achievable by any relative entropy code. Then,
\begin{equation}
\FI{X}{Y} \leq H^* \leq \FI{X}{Y} + \log e. \label{eq:sandwich}
\end{equation}
\end{corollary}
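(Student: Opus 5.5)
The corollary is a sandwich, and I will prove the two inequalities in \eqref{eq:sandwich} separately, relying on results already stated.

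\textbf{Lower bound.} Fix any common randomness $Z$ independent of $X$, together with any encoder and decoder satisfying the correctness requirement. By \eqref{eq:lowerbound}, i.e.\ Theorem III.1 of~\cite{flamich2025redundancy}, we have $\FI{X}{Y} = \Exp[\CSD{P_{X \mid Y}}{P_X}] \leq H(Y \mid Z)$. This holds for every admissible $Z$, so taking the infimum over $Z \perp X$ gives $\FI{X}{Y} \leq H^*$. One point needs care: $H^*$ is defined as an infimum over $Z$, and I read it as ranging over valid stochastic codes, meaning $Y$ is a function of the message and $Z$ with the correct conditional law. Under that reading the lower bound applies verbatim to each feasible $Z$.

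\textbf{Upper bound.} I will exhibit one feasible scheme. Take $Z = \qty{(Y_i, U_i)}_{i \ge 1} \sim (P_Y \times \Unif(0,1))^{\otimes \Nats}$, which is independent of $X$, and let $K$ be the rejection sampling index from Theorem~\ref{thm:rts}. The bounded density ratio assumption is exactly the hypothesis of that theorem, so $Y_K \sim P_{Y \mid X}$ and $\Ent(K \mid Z) \leq \FI{X}{Y} + \log e$. Since $Y_K$ is a deterministic function of $(K, Z)$, we get $H(Y_K \mid Z) \leq H(K, Y_K \mid Z) = H(K \mid Z)$. Therefore
\[
H^* \leq H(Y_K \mid Z) \leq \FI{X}{Y} + \log e .
\]

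\textbf{Main obstacle.} The only potential subtlety is bookkeeping. I must ensure the $Y$ in $H(Y\mid Z)$ is the simulated output $Y_K$, so that $(X, Y_K)$ has joint law $P_{X,Y}$ and the feasibility of this $Z$ in the infimum is justified. This follows from $Y_K \sim P_{Y\mid X}$ conditionally on $X$. Given that, both directions are immediate, and combining them yields \eqref{eq:sandwich}.
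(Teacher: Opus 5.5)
Your proof is correct and follows the paper's own argument. The lower bound is the functional information bound \eqref{eq:lowerbound} (Theorem III.1 of \cite{flamich2025redundancy}) applied to every feasible $Z$. The upper bound is Theorem~\ref{thm:rts} combined with $H(Y_K \mid Z) \leq H(K \mid Z)$, a step the paper also notes at the start of that theorem's proof.
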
 
\begin{figure}
    \centering
    \includegraphics[]{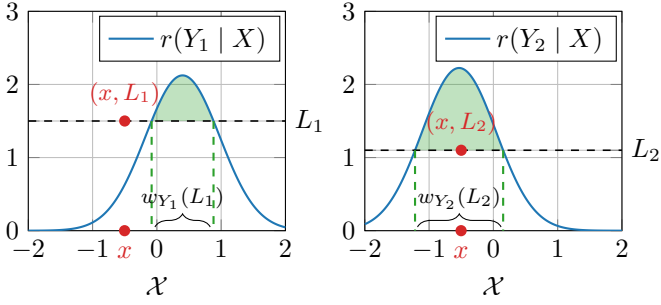}
    \caption{A visualization of the \textcolor{tabgreen}{acceptance region} in rejection sampling for an additive white Gaussian noise channel $Y = X + N$ with $X \sim \Normal{0}{0.75}$ and $N \sim \Normal{0}{0.25}$, where $X$ and $Y$ are truncated to a large but finite range to ensure a bounded density ratio. Here, \textcolor{tabred}{$x$}$\;=-0.5$ and the two proposal samples are $Y_1 = 0.3, Y_2=-0.4$ with heights $L_1=1.5$ and $L_2=1.1$, where we use the shorthand $L_i \defeq MU_i$. We see that  \textcolor{tabred}{$x$} $\not\in S_1$ but \textcolor{tabred}{$x$} $\in S_2$, meaning $Y_1$ is rejected and $Y_2$ is accepted. The $P_X$-probability of hitting the set $S_i$ is exactly $w_{Y_i}(L_i)$.}
    \label{fig:rts}
\end{figure}
Note that the assumption of Corollary~\ref{cor:gnsct} applies to all finitely-supported channels. 
Furthermore, when $X$ is discrete and $X \to Y$ is the trivial channel $Y = X$, we have $\FI{X}{Y}=\Ent(X)$.
Thus, the corollary also recovers the noiseless source coding theorem with a slightly worse constant term of $1 + \log e \approx 2.44$ instead of $1$.
The cost of the ring toss code can also be written in terms of the mutual information, improving the best-known constant from 2.74 to 2.45~\cite{li2025discrete}.
\begin{corollary}
Let $P_{X, Y}$ be as in Corollary~\ref{cor:gnsct}. Then, the ring toss code satisfies
\begin{align*}
\Ent(Y \mid Z) 
&\leq \MI{X}{Y} + \log (\MI{X}{Y} + 1) + 1 + \log e \\
&<\MI{X}{Y} + \log (\MI{X}{Y} + 1) + 2.45.
\end{align*}
\end{corollary}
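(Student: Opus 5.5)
The argument chains Theorem~\ref{thm:rts} with the right-hand inequality of \eqref{eq:kltocsd}, applied pointwise in $y$ and then averaged. By Theorem~\ref{thm:rts}, the ring toss code gives $\Ent(Y \mid Z) \leq \Ent(K \mid Z) \leq \FI{X}{Y} + \log e$. It therefore suffices to show
\begin{equation*}
\FI{X}{Y} \leq \MI{X}{Y} + \log(\MI{X}{Y} + 1) + 1 .
\end{equation*}

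First, for each fixed $y$, apply \eqref{eq:kltocsd} with $P = P_{X \mid Y = y}$ and $Q = P_X$. Writing $\kappa_y \defeq \KLD{P_{X \mid Y = y}}{P_X}$, this gives
\begin{equation*}
\CSD{P_{X \mid Y=y}}{P_X} \leq \kappa_y + \log(\kappa_y + 1) + 1 .
\end{equation*}
The hypothesis of \eqref{eq:kltocsd} requires $P_{X \mid Y=y} \ll P_X$. This holds for $P_Y$-almost every $y$, since the bounded density ratio $dP_{Y\mid X}/dP_Y \le M$ implies $P_{X,Y} \ll P_X \times P_Y$.

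Next, take the expectation over $Y \sim P_Y$. The identity $\Exp[\kappa_Y] = \MI{X}{Y}$ is the standard expression of mutual information as an averaged KL divergence. For the logarithmic term, $t \mapsto \log(t+1)$ is concave, so Jensen's inequality gives $\Exp[\log(\kappa_Y + 1)] \leq \log(\Exp[\kappa_Y] + 1) = \log(\MI{X}{Y} + 1)$. Combining these with Definition~\ref{def:csmi},
\begin{equation*}
\FI{X}{Y} = \Exp[\CSD{P_{X \mid Y}}{P_X}] \leq \MI{X}{Y} + \log(\MI{X}{Y}+1) + 1,
\end{equation*}
and adding $\log e$ yields the first displayed inequality of the corollary.

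For the strict numerical bound, note that $\log e = 1/\ln 2 \approx 1.4427$, so $1 + \log e \approx 2.4427 < 2.45$. Finiteness is not an issue, because $\MI{X}{Y} \leq \log M < \infty$ under the bounded-ratio assumption.

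The only step that needs care is the Jensen step, which must run in the correct direction. It does, because the logarithmic term enters as an upper bound and $\log(t+1)$ is concave.
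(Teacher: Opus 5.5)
Your proposal is correct and follows the paper's proof: apply the upper bound in \eqref{eq:kltocsd} for each $y$, average using $\Exp[\kappa_Y] = \MI{X}{Y}$ and Jensen's inequality for the concave map $t \mapsto \log(t+1)$, then add the $\log e$ from Theorem~\ref{thm:rts}. Your remarks on absolute continuity, on $\MI{X}{Y} \leq \log M$, and on $1 + \log e < 2.45$ are valid details that the paper's one-line proof leaves implicit.
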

\begin{proof}
The result is immediate after applying the upper bound in~\eqref{eq:kltocsd} to the result of Theorem~\ref{thm:rts} and invoking Jensen's inequality. 
\end{proof}
\section{Singular Relative Entropy Coding} \label{sec:singular}
The ring toss code applies particularly naturally to so-called \textit{singular channels}, which we present next.
\begin{definition}[Singular Channel~\cite{altuug2014refinement, sriramu2024optimal}] \label{def:singular}
Given $P_{X, Y}$, the channel $X \to Y$ is singular if there exists a $P_Y$-measurable function $g$ such that
\begin{equation}
\RND{P_{Y \mid X}}{P_Y}(Y \mid X) = g(Y) \quad P_{X, Y}\text{-almost surely}. 
\end{equation}
\end{definition}
Any channel that is not singular is called \emph{nonsingular}. The two canonical examples of singular channels are the binary erasure channel and the continuous uniform additive noise channel. Singular channels are important as they have been shown to admit relative entropy codes with sub-logarithmic redundancy in the asymptotic case~\cite{sriramu2024optimal, flamich2026singular}. 
Using Theorem~\ref{thm:rts}, we obtain a significantly stronger statement: the ring toss code for singular channels achieves constant one-shot redundancy. 
\begin{corollary} \label{cor:noredund}
Let $X, Y \sim P_{X, Y}$ with $X \to Y$ singular and $\RND{P_{Y \mid X}}{P_Y}(y \mid x) \leq M$ for all $x$ and $y$.
The ring toss code achieves
\begin{equation}
\Ent(Y \mid Z) \leq \MI{X}{Y} + \log e.
\end{equation}
\end{corollary}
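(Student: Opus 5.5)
By Theorem~\ref{thm:rts}, the ring toss code achieves $\Ent(Y \mid Z) \leq \Ent(K \mid Z) \leq \FI{X}{Y} + \log e$. So it suffices to show that $\FI{X}{Y} = \MI{X}{Y}$ for singular channels. The inequality $\MI{X}{Y} \leq \FI{X}{Y}$ already follows from \eqref{eq:kltocsd} after averaging over $Y$. Hence the real task is to show that, for $P_Y$-almost every $y$, $\CSD{P_{X \mid Y = y}}{P_X} = \KLD{P_{X \mid Y = y}}{P_X}$, and then take expectations.

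Fix $y$ and write $r_y(x) \defeq \RND{P_{X\mid Y}}{P_X}(x \mid y) = \RND{P_{Y \mid X}}{P_Y}(y \mid x)$, using Bayes' rule for the density ratio. Singularity says $r_y(X) = g(y)$ holds $P_{X\mid Y=y}$-almost surely, i.e., on the support of $P_{X \mid Y = y}$. Off that support the density ratio is zero $P_X$-a.s., since $P_{X\mid Y=y}$ of that set is zero. Thus $r_y$ is $P_X$-a.s. two-valued, taking $g(y)$ on a set $A_y$ and $0$ elsewhere. Normalization $\int r_y \, dP_X = 1$ gives $P_X(A_y) = 1/g(y)$.

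Consequently, the width function is $w_y(\ell) = P_X(A_y) = 1/g(y)$ for $0 < \ell \leq g(y)$ and $0$ for $\ell > g(y)$. This is the uniform density on $(0, g(y)]$, so its differential entropy is $\log g(y)$. On the other hand,
\begin{equation*}
\KLD{P_{X\mid Y=y}}{P_X} = \Exp_{P_{X \mid Y = y}}[\log r_y(X)] = \log g(y).
\end{equation*}
Therefore the two divergences agree for almost every $y$. Averaging over $Y \sim P_Y$ yields $\FI{X}{Y} = \Exp[\log g(Y)] = \MI{X}{Y}$, and plugging this into Theorem~\ref{thm:rts} gives the claim.

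The main obstacle is the measure-theoretic bookkeeping in the second step. The singularity condition holds only $P_{X,Y}$-a.s., so we must pass to a statement that holds for $P_Y$-a.e.\ $y$ and $P_X$-a.e.\ $x$ in the support. We must also handle versions of the Radon--Nikodym derivatives, identifying $\RND{P_{X\mid Y}}{P_X}(x\mid y)$ with $\RND{P_{Y\mid X}}{P_Y}(y\mid x)$ as the joint density $\RND{P_{X,Y}}{P_X\times P_Y}$. I would handle this with Fubini, applied to the null set where $r(x,y) \neq g(y)$ and $r(x,y) > 0$: it has $P_{X,Y}$-measure zero, hence $P_{X|Y=y}$-measure zero for a.e.\ $y$. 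This suffices, since the width function only depends on $r_y$ up to $P_X$-null sets on $\{r_y>0\}$.
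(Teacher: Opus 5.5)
Your proof is correct, but it reaches $\FI{X}{Y} = \MI{X}{Y}$ by a different route than the paper.

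The paper proves a two-sided characterization, Lemma~\ref{lemma:singularity}: the channel is singular if and only if $\MI{X}{Y} = \FI{X}{Y}$. It does this by importing the equality case of $\CSD{P}{Q} \ge \KLD{P}{Q}$ from~\cite{flamich2024data}, namely that equality holds if and only if $w_P(\ell)=1$ whenever $w_Q(\ell)>0$. It then translates this condition into ``$dP/dQ$ is $P$-a.s.\ constant.''

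You instead compute everything explicitly. You show that singularity forces $r_y \in \{0, g(y)\}$ $P_X$-a.s., with $P_X(r_y = g(y)) = 1/g(y)$. Hence $w_y$ is the uniform density on $(0, g(y)]$, and both divergences equal $\log g(y)$.

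What your route buys:
\begin{itemize}
\item It is self-contained and needs no external equality condition.
\item It gives the explicit formula $\FI{X}{Y} = \Exp[\log g(Y)]$.
\item Your two-valuedness step spells out a point the paper passes over quickly: a $P$-a.s.\ constant density ratio must equal its $Q$-essential supremum.
\end{itemize}

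What the paper's route buys: you only prove singular $\Rightarrow$ equality, which is all the corollary needs. The paper's lemma also gives the converse, that $\FI{X}{Y} > \MI{X}{Y}$ for every nonsingular channel. So singular channels are exactly those for which the bound of Theorem~\ref{thm:rts} reduces to $\MI{X}{Y} + \log e$.

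A minor point: your remark that $\MI{X}{Y} \le \FI{X}{Y}$ follows from \eqref{eq:kltocsd} is harmless but unnecessary. Only $\FI{X}{Y} \le \MI{X}{Y}$ is used in the upper bound.
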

\begin{proof}
We begin by proving the following lemma.
\begin{lemma}[A characterization of channel singularity]
\label{lemma:singularity}
Let $X, Y \sim P_{X, Y}$ be a pair of dependent random variables. Then, the following statements are equivalent:
\begin{enumerate}
\item The channel $X \to Y$ is singular.
\item $\MI{X}{Y} = \FI{X}{Y}$.
\end{enumerate}
\end{lemma}
\begin{proof} 
For $P \ll Q$, the proof of Lemma 3.1.2 in~\cite{flamich2024data} shows that $\CSD{P}{Q} \geq \KLD{P}{Q}$, with equality if and only if $w_P(\ell) = 1$ for all $\ell$ with $w_Q(\ell) > 0$.
Define $g \defeq \sup \qty{\ell \in [0, \infty) : w_Q(\ell) > 0}$ and write $\!r = dP / dQ\!$. The equality condition implies that $r \geq g$ $P$-almost everywhere. Moreover, the definition of $g$ means $r \leq g$ $Q$-almost everywhere, and because $P \ll Q$, $r \leq g$ $P$-almost everywhere too. Thus, equality holds if and only if $r = g$ $P$-almost everywhere. For $X, Y$ random-variables,  fixing $Y = y$ and letting $Q = P_X$ and $P = P_{X = Y \mid y}$, the equality condition becomes 
\begin{equation}
\RND{P_{X \mid Y}}{P_X} (X \mid y) = g \quad P_{X \mid Y = y}\text{-almost surely}. \label{eq:equalitysingular}
\end{equation}
To have $\FI{X}{Y} = I(X;Y)$, equality must hold in~\eqref{eq:equalitysingular} for each $y$ with $P_Y(y) > 0$. That is,
\begin{equation}
\RND{P_{X \mid Y}}{P_X}(X \mid Y) = g(Y) \quad P_{X, Y}\text{-almost surely}
\end{equation}
for some $P_Y$-measurable function $g$. Hence, $\FI{X}{Y} = \MI{X}{Y}$ if and only if $X \to Y$ is singular.   
\end{proof}
Now, the result follows from \Cref{thm:rts} and Lemma~\ref{lemma:singularity}.
\end{proof}
The ring toss code for singular channels admits an equivalent description, visualized in Fig.~\ref{fig:singular_ring_toss}, which further illustrates how the problem has been transformed into a \emph{search over~$\XSpace$}. 
Specifically, for singular $X, Y \sim P_{X, Y}$ with bounded density ratio $\norm{g}_\infty \leq M$, and common randomness $Z = \qty{(Y_i, U_i)}_{i \geq 1} \overset{\text{i.i.d.}}{\sim} P_Y \times \Unif(0, 1)$, define the set 
\begin{equation}
\label{eq:singular_rtc_filtered_indexes}
\filteredInds = \qty{i \in \Nats : MU_i \leq  g(Y_i)}.
\end{equation}
Notice that $\filteredInds$ is exactly the set of all indices whose corresponding sample $Y_i$ \emph{could} be accepted by rejection sampling. However, recall that although $r = \RND{P_{Y \mid X}}{P_Y}$ is constant over its support, it might not be supported on the whole of $\XSpace$. Thus, not every index in $\mathcal{S}$ would be accepted by rejection sampling, as any sample with $P_{Y \mid X}(Y_i \mid x) = 0$ will be rejected. This probability cannot be computed by the decoder, which lacks knowledge of~$x$. Therefore, the encoder computes and encodes 
\begin{equation}
    K = \min \qty{i \in \filteredInds : \RND{P_{Y \mid X}}{P_Y}(Y_i \mid x) > 0}. \label{eq:ksingular}
\end{equation}
Inspecting~\eqref{eq:ksingular} in light of the proof of Lemma~\ref{lemma:singularity}, it is clear that $K$ is again exactly the index returned by rejection sampling. Written this way, the encoder and decoder both ``pre-filter" the sequence of proposal samples for those which could be accepted, and the encoder communicates the first $Y_i,$ $i \in \mathcal{S}$ which has a nonzero probability of being generated by $x$. This pre-filtering is implicitly built into $Q_{K \mid Z}$, which assigns zero probability to any $k$ for which $U_k$ exceeds ${\sup_x r(Y_k \mid x)/M}$. This interpretation is the namesake of the \emph{ring toss code}, as the decoder ``tosses'' the support $S_i$ of the function $x \mapsto r(Y_i \mid x)$, attempting to land it on $x$, akin to the game of the same name.  
\section{Recovering the Asymptotic Characterization of Relative Entropy Coding}
\begin{figure}
    \centering
    \vspace*{0.04in}
    \includegraphics[]{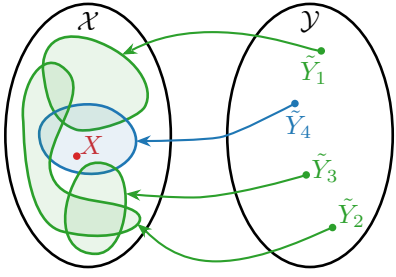}
    \caption{Illustration of the ring toss code for $X, Y \sim P_{X, Y}$ when the channel $X \to Y$ is singular. 
    Given the common randomness $Z$, upon \deleted{the sender} observing the source {\color{tabred} $X \sim P_X$}, the \added{sender} compute\added{s} the filtered set of indices $\filteredInds$ from \eqref{eq:singular_rtc_filtered_indexes}.
    Denote the the filtered samples {\color{tabgreen} $\{\tilde{Y}_i\}_{i \geq 1} = \qty{Y_i}_{i \in \filteredInds}$}.
    Now, for each $\tilde{Y}_i$, the encoder computes the support $S_i$ (shaded {\color{tabgreen} green} and {\color{tabblue} blue} subsets of $\XSpace$ in the figure) of the function $x \mapsto \frac{dP_{Y \mid X}}{dP_Y}(\tilde{Y_i} \mid x)$ and accepts the first $\tilde{Y}_i$ where $X \in S_i$, {\color{tabblue} $\tilde{Y}_4$} in the example above.
    }
    \label{fig:singular_ring_toss}
\end{figure}
Together, Theorem~\ref{thm:rts} and Corollary~\ref{cor:noredund} immediately recover previous results on the asymptotic rate of relative entropy coding ~\cite{sriramu2024optimal,flamich2025redundancy}. 
Concretely, given $P_{X, Y}$, consider the problem of simulating $n$ independent copies of the channel $X \to Y$, i.e., letting the proposal distribution be $P_{Y}^{\otimes n}$ and, upon input $X^n \sim P_{X}^{\otimes n}$, sending a binary message so the encoder can generate a sample $Y^n \sim P_{Y^n \mid X^n = x^n}$. In this problem, of interest is the best achievable rate per dimension
\begin{equation}
    R_n \defeq \tfrac{1}{n} \inf\nolimits_{(Z, \enc,\dec)} \Exp[ \lvert \enc(X^n, Z) \rvert].
\end{equation}
The distribution $P_{X^n, Y^n}$ satisfies ${\MI{X^n}{Y^n} = n \MI{X}{Y}}$, and hence ${R_n \to \MI{X}{Y}}$ as ${n \to \infty}$. Given the fundamental lower bound ${\Ent(Y \mid Z) \geq \MI{X}{Y}}$, this is optimal. 
In light of the logarithmic gap over the mutual information in the one-shot case,~\cite{sriramu2024optimal,flamich2025redundancy} study the logarithmic redundancy 
\begin{equation}
    R^{\text{log}} \defeq \lim_{n \to \infty} \frac{n (R_n - \MI{X}{Y})}{\log n}.
\end{equation}
It is clear from inspection that $0 \leq R^{\text{log}} \leq 1$. As a corollary to Theorem~\ref{thm:rts}, we obtain the same characterization of $R^{\text{log}}$ as~\cite{flamich2025redundancy}:
\begin{corollary}
    Under the asymptotic setup previously described, for any channel $X \to Y$ with bounded density ratio, 
    \begin{equation}
        R^{\textup{log}} = \begin{cases}
            0, &X \to Y \text{ singular} \\
            1/2 &X \to Y \text{ nonsingular}. 
        \end{cases}
    \end{equation}
\end{corollary}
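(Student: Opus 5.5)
The plan is to sandwich $n R_n$ between the lower bound \eqref{eq:csmi_lower_bound} and the upper bound of Theorem~\ref{thm:rts}, both applied to the product pair $P_{X^n,Y^n} = P_{X,Y}^{\otimes n}$. The expected length $\Exp[\lvert \enc \rvert]$ and $H(Y^n \mid Z)$ differ by at most a constant number of bits. On the achievability side, encoding $K$ with a Shannon-type prefix code for $Q_{K\mid Z}$ costs at most one extra bit beyond the cross-entropy. On the converse side, \eqref{eq:basic_mi_lower_bound} gives $H(Y^n\mid Z)\le\Exp[\lvert\enc\rvert]$. Hence
\begin{equation*}
\FI{X^n}{Y^n} \le n R_n \le \FI{X^n}{Y^n} + \log e + 1,
\end{equation*}
and the product channel still has the bounded density ratio $M^n$, so Theorem~\ref{thm:rts} applies. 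Since constants vanish after dividing by $\log n$, this reduces to showing
\begin{equation*}
\lim_{n\to\infty}\frac{\FI{X^n}{Y^n} - n\MI{X}{Y}}{\log n} = 0 \text{ or } \tfrac12,
\end{equation*}
in the singular and nonsingular cases respectively.

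\textbf{Singular case.} A product of singular channels is singular: the density ratio factorizes as $\prod_i g(Y_i)$, which is a function of $Y^n$ alone. Lemma~\ref{lemma:singularity} then gives $\FI{X^n}{Y^n} = \MI{X^n}{Y^n} = n\MI{X}{Y}$. Equivalently, Corollary~\ref{cor:noredund} applies directly, so $n(R_n - \MI{X}{Y}) = \Oh(1)$ and $R^{\log}=0$.

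\textbf{Nonsingular case.} Here I need the second-order behaviour $\FI{X^n}{Y^n} = n\MI{X}{Y} + \tfrac12\log n + o(\log n)$. I would take this from~\cite{flamich2025redundancy}, where it underlies their characterization (their lower bound is exactly $\FI{}{}$). If I had to prove it myself, I would first fix $y^n$ and observe that $\log \RND{P_{X^n\mid Y^n}}{P_{X^n}}(X^n\mid y^n)$ is a sum of $n$ independent terms. Under $P_X^{\otimes n}$ this sum has mean $-\KLD{P_X}{P_{X\mid Y}}$-type drift, and nonsingularity means its conditional variance $\sigma^2(Y)$ is positive with positive probability. I would then write $\CSD{P}{Q} = \KLD{P}{Q} + \Exp_P[\,\cdot\,]$, a correction that in the proof of Lemma~3.1.2 of~\cite{flamich2024data} equals $-\Exp_P[\log w_P(r)]$ (up to sign conventions). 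The remaining task is to show that $-\log w_P$ evaluated at the level $r(X^n)$ averages to $\tfrac12\log n + \Oh(1)$. This should follow from a local CLT or Berry--Esseen estimate: the $P$-mass of the density ratio above level $\ell$ behaves like a Gaussian tail in $\log \ell$, rescaled by $\sqrt n$, and the spread $\sqrt n\,\sigma$ contributes $\log\sqrt n$.

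The main obstacle is this nonsingular asymptotic, in particular making the local-CLT estimate uniform over $y^n$ and handling lattice versus non-lattice log-likelihood ratios. I would first try to cite it from~\cite{flamich2025redundancy}. Failing that, I would prove matching upper and lower bounds only up to $o(\log n)$, using Berry--Esseen together with the bound \eqref{eq:kltocsd} to control atypical $y^n$.
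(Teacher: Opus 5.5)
Your proposal is correct and is essentially the paper's proof. You sandwich $nR_n$ between $\FI{X^n}{Y^n}$ and the ring toss bound for the $M^n$-bounded product channel, obtain the singular case from Corollary~\ref{cor:noredund} (the product of singular channels is singular), and import the $\tfrac12\log n$ gap for nonsingular channels from~\cite{flamich2025redundancy}; the bounded density ratio guarantees that result's finite-second-moment hypothesis on the information density. The only flaw is in your fallback sketch, which the argument does not rely on: $\CSD{P}{Q}-\KLD{P}{Q}$ is not $-\Exp_P[\log w_P(r)]$ in general. For $Q$ uniform on two points and $P=(0.6,0.4)$, the gap is about $0.17$ bits while $-\Exp_P[\log w_P(r)]\approx 0.44$ bits, so a self-contained proof of the asymptotic would need a different decomposition.
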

\begin{proof}
The singular case is immediate from Corollary~\ref{cor:noredund}. 
For the nonsingular case, we use a result from the proof of Theorem 3.1.6 in \cite{flamich2025redundancy}, which shows that nonsingular channels whose information density has a finite second moment satisfy
\begin{equation}
\lim_{n \to \infty} \frac{\FI{X^n}{Y^n} - \MI{X^n}{Y^n}}{\log n} = \frac{1}{2}.
\end{equation}
Since we assumed that the density ratio of $X \to Y$ is uniformly bounded, the above result applies; combining it with \Cref{thm:rts} completes the proof.
\end{proof}
Our construction has the advantage over~\cite{sriramu2024optimal, flamich2026singular} of being significantly simpler and one-shot optimal. In the singular case, the achievability algorithms of both~\cite{sriramu2024optimal} and~\cite{flamich2026singular} incur an $\Oh(\log \log n)$ penalty over $\MI{X}{Y}$. 

\section{Discussion} \label{sec:discussion}
We have defined the \emph{ring toss code}, an encoding method for standard rejection sampling, and proved it is optimal for relative entropy coding. In doing so, we provide a matching achievability bound to the one-shot lower bound in Theorem III.1 of~\cite {flamich2025redundancy}. For singular channels, our upper bound achieves a constant redundancy over the mutual information. 
\par
There are several interesting directions of future work stemming from our results. Most immediately, it is of interest to remove the boundedness condition $\norm{\RND{P_{Y \mid X}}{P_Y}}_\infty \leq M$. We emphasize that this is a requirement of the rejection sampling algorithm, and are optimistic that the insight of encoding the probability of hitting a set in $\XSpace$-space can be adapted to other relative entropy codes \added{with fewer assumptions and better computational complexity}. Moreover,~\cite{goc2024channel} shows that any sampling algorithm will have expected runtime ${\Exp[K] \geq \norm{\RND{P_{Y \mid X}}{P_Y}}_\infty}\!\!$.
Thus, the boundedness assumption is equivalent to the algorithm terminating almost surely. 
\par
The functional information also has several interesting open questions about its properties.
For instance, are there classes other than singular channels for which the gap between $\FI{X}{Y}$ and $\MI{X}{Y}$ is sub-logarithmic, and more practically, can $\FI{X}{Y}$ be efficiently computed when $X$ and $Y$ are Gaussian random variables?
\appendices
\crefalias{section}{appendix}
\section{Proof that $Q_{K \mid Z}$ is a probability distribution} \label{app:dist}
Let $X, Y, M$, $Z = \qty{(Y_i, U_i)}_{i \geq 1}$, and $Q_{K \mid Z}$ be as in \Cref{thm:rts}.
We now show that for $P_Z$-almost every $Z = z$,
\begin{equation}
Q_{K \mid Z}(k \mid z) = w_{y_k}( Mu_k) \prod_{i=1}^{k-1} (1 - w_{y_i}(Mu_i))
\end{equation}
is a valid probability distribution. From the fact that $w_y$ is a probability density function, we obtain that ${0 \leq Q_{K \mid Z}(k \mid z) \leq 1}$ for each $k$. Now, we wish to show that $Q_{K \mid Z}(k \mid z)$ sums to 1. We calculate that
\begin{align*}
    &\sum_{k=1}^n Q_{K \mid Z}(k \mid z) \\
    &= \sum_{k=1}^n \left( \prod_{i=1}^{k-1} (1 - w_{y_i}(Mu_i)) - \prod_{i=1}^k (1 - w_{y_i}(Mu_i)) \right) \\
    &= 1 - \prod_{i=1}^n (1 - w_{y_i}(Mu_i)),
\end{align*}
where $\prod_{i=1}^0 (1 - w_{y_i}(Mu_i)) \defeq 1$. We obtain the result after concluding that $\lim_{n \to \infty} \prod_{i=1}^n (1 - w_{y_i}(Mu_i)) = 0$ almost surely, as the $w_{Y_i}(MU_i)$ are i.i.d. with mean greater than 0. 

\balance 
\bibliographystyle{ieeetr}
\bibliography{references}

\end{document}